\documentclass[preprint, 3p, 
]{elsarticle} 
\usepackage{mathrsfs}
\usepackage{amsfonts}
\usepackage{amsmath}
\usepackage{amssymb}
\usepackage{amsthm}
\usepackage{txfonts}
\usepackage{graphicx}
\usepackage{bm}
\usepackage{color}
\usepackage{hyperref}
\usepackage[normalem]{ulem}

\newcommand{\ket}[1]{|#1\rangle}
\newcommand{\bra}[1]{\langle #1|}

\newtheorem{thm}{Theorem}

\begin{document}

\title{Sum rule for non-adiabatic geometric phases}

\author{Adam Fredriksson\corref{} and Erik Sj\"oqvist\corref{cor1}}

\cortext[cor1]{erik.sjoqvist@physics.uu.se} 

\address{Department of Physics and Astronomy, Uppsala University, Box 516, 
SE-751 20, Uppsala, Sweden}

\date{\today}

\begin{abstract}
Berry monopoles always cancel when summing over a complete set of energy eigenstates. 
We demonstrate that analogous sum rules exist for geometric phases and their underlying 
2-forms in non-adiabatic evolution. Our result has implications for qudit computation as it 
limits the types of gates that can be implemented by purely geometric means.
\end{abstract}

\maketitle 

Berry's discovery \cite{berry84} of the geometric phase accompanying cyclic adiabatic 
changes in quantum systems opened the door to studying magnetic monopole analogues 
in the laboratory. These monopoles have been predicted and detected in various physical 
settings, such as in the context of Bose-Einstein condensates \cite{ray15}, in Weyl 
semimetal \cite{unzelmann21}, as well as in spin Hall effect \cite{ishida23}.

Consider a Hamiltonian $H(\mathbf{R})$ with non-degenerate eigenvalues $\{ E_n ({\bf R})\}$ 
and control parameters $\mathbf{R} = (R_{1}, \ldots, R_{M})$ that evolve adiabatically along 
a closed curve $\mathcal{C}$ enclosing a surface $\mathcal{S}$. The Berry phase associated 
with the loop $\mathcal{C}$ and picked up by the eigenstate $\ket{n(\mathbf{R})}$ of $H(\mathbf{R})$ 
reads  \cite{berry84}
\begin{eqnarray}
\gamma_{n}(\mathcal{C}) = \oint_{\mathcal{C}} \mathcal{A}^{(n)} = 
\int_{\mathcal{S}|\partial \mathcal{S} = \mathcal{C}} \mathcal{F}^{(n)} .
\label{eq:BerryPhase}
\end{eqnarray}
Here, $\mathcal{A}^{(n)} \equiv i\bra{n}dn\rangle$  
and $\mathcal{F}^{(n)} \equiv d\mathcal{A}^{(n)} = i\bra{dn}\wedge\ket{dn}$
are the Berry connection 1-form and curvature 2-form \cite{simon83}, respectively, with 
exterior derivative $d=\sum_{j = 1}^{M} dR_{j} \partial /\partial R_{j}$ on parameter space. In 
Ref.~\cite{chruscinski04}, it is shown that the sum of the Berry curvatures at each point 
in parameter space vanishes, i.e.,   
\begin{equation}
\sum_{n}\mathcal{F}^{(n)} = 0 .
\label{eq:CJprop}
\end{equation}
In particular, this implies that the monopole charges 
$q_n = \int_{\mathcal{V}|\partial \mathcal{V}=\mathscr{S}} d\mathcal{F}^{(n)}$  within any closed 
surface $\mathscr{S}$ in parameter space, always cancel when summed over all states 
\cite{eriksson20}. 

Now, the Berry phase has been generalized to non-adiabatic time evolution associated with  
closed curves in projective Hilbert space \cite{aharonov87}. As the energy eigenstates in 
the case of the Berry phase may alternatively be viewed as evolving in projective Hilbert 
space, one may expect that the sum rule in Eq.~\eqref{eq:CJprop} should hold for 
non-adiabatic time evolution as well. The purpose of the present work is to show that 
this is indeed the case. 

To this end, consider a $d$ dimensional Hilbert space $\mathscr{H}_{d}$ ($d$ finite). Let 
$x_{\mu}$ be local coordinates of projective Hilbert space 
$\mathscr{P}(\mathscr{H}_{d})\cong\mathbb{C}\mathbf{P}^{d-1}\cong S^{2d-1}/S^{1}$. 
A cyclic evolution is a loop $C$ in $\mathscr{P}(\mathscr{H}_{d})$, for which one may 
associate a geometric phase \cite{aharonov87}
\begin{eqnarray}
\gamma_{g}(C) = \oint_{C}A = \int_{S|\partial S = C}F 
\label{eq:AAphase}
\end{eqnarray}
with $A \equiv i\bra{\phi}d\phi\rangle$ and $F \equiv dA = i\bra{d\phi}\wedge\ket{d\phi}$ 
the non-adiabatic connection 1-form and curvature 2-form, respectively. Here, $\ket{\phi}$ 
is a local section of the principal fiber bundle over $\mathscr{P}(\mathscr{H}_{d})$  with 
group U(1) \cite{bohm91} and corresponding exterior derivative $d=\sum_{\mu} dx_{\mu} 
\partial/\partial x_{\mu}$. 
 
A possible physical realization of $C$ is an eigenstate $\ket{\psi_0}$ of continuous unitary 
time evolution $U(t,0)=e^{-\frac{i}{\hbar} Ht}$ generated by a Hamiltonian $H$. Here, 
$\ket{\phi (s)} = e^{-if(s)} U(s,0) \ket{\psi_0}$, $s\in [0,t]$, with $f(s)$ differentiable and chosen so 
that $\ket{\phi (t)}=\ket{\phi (0)}$ (i.e., $f(t)-f(0) = \alpha$ with $U(t,0)\ket{\psi_0} = 
e^{i\alpha} \ket{\psi_0}$). Another realization is a sequence of complete projective 
measurements $P_1,\ldots,P_K$ generating a loop consisting of segments of null 
phase curves \cite{rabei99}. 

We are now ready to state and prove our first result: 
\begin{thm}
Given an orthonormal set $\{\ket{\phi_{j}}\}_{j = 1}^{d}$ of local sections spanning 
$\mathscr{H}_d$, in terms of which we define the curvature 2-forms $F^{(j)} = i\bra{d\phi_{j}} 
\wedge \ket{d\phi_{j}}$, it holds that 
\begin{eqnarray}
\sum_{j=1}^d F^{(j)} = i\sum_{j=1}^d \bra{d\phi_j} \wedge \ket{d\phi_j} = 0 .   
\label{eq:thm1}
\end{eqnarray}
\end{thm}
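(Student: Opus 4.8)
The plan is to reduce the identity to two elementary facts that hold pointwise on any coordinate patch where all the sections are defined: the orthonormality relations $\langle\phi_i|\phi_j\rangle=\delta_{ij}$ and the resolution of identity $\sum_{k=1}^{d}\ket{\phi_k}\bra{\phi_k}=1$. The left-hand side of Eq.~\eqref{eq:thm1} is a scalar-valued $2$-form built by wedging the form parts of the $\mathscr{H}_d$-valued $1$-forms $\ket{d\phi_j}$ and $\bra{d\phi_j}$ while contracting their Hilbert-space parts through the inner product; the central observation is that this contraction may be factored through the identity operator.

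First I would insert the resolution of identity into each term of the sum, which converts it into a sum of wedge products of ordinary complex-valued $1$-forms,
\begin{equation}
\sum_{j=1}^{d} i\,\bra{d\phi_j}\wedge\ket{d\phi_j} = i\sum_{j,k=1}^{d}\bra{d\phi_j}\phi_k\rangle\wedge\bra{\phi_k}d\phi_j\rangle .
\end{equation}
Second, differentiating the orthonormality relations gives $\bra{d\phi_j}\phi_k\rangle=-\bra{\phi_j}d\phi_k\rangle$, so the right-hand side becomes $-i\sum_{j,k}\bra{\phi_j}d\phi_k\rangle\wedge\bra{\phi_k}d\phi_j\rangle$. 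Third, I would relabel the summation indices $j\leftrightarrow k$ and invoke the antisymmetry $\alpha\wedge\beta=-\beta\wedge\alpha$ of the wedge product of $1$-forms; the double sum is then seen to equal minus itself and therefore vanishes, which is Eq.~\eqref{eq:thm1}. Equivalently, one may package the argument in terms of the anti-Hermitian matrix of connection $1$-forms $\omega_{jk}\equiv\bra{\phi_j}d\phi_k\rangle$, which obeys $\omega_{jk}=-\overline{\omega_{kj}}$, and observe that $\sum_{j,k}\omega_{jk}\wedge\omega_{kj}$ is odd under $j\leftrightarrow k$.

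The computation itself is short, so the only place that needs care is the bookkeeping for Hilbert-space-valued differential forms: one must check that inserting the (zero-form) identity operator genuinely commutes with the wedge operation, and that the Hermitian conjugation implicit in the notation $\bra{d\phi_j}$ is tracked correctly throughout. I would also emphasize that the argument is entirely local and pointwise — the sections need only be defined and orthonormal on a common patch — so the result carries no global or topological content of its own. A reassuring consistency check is the adiabatic limit: taking the $\ket{\phi_j}$ to be the instantaneous energy eigenstates $\ket{j(\mathbf{R})}$ collapses Eq.~\eqref{eq:thm1} onto the known sum rule~\eqref{eq:CJprop}, so the statement is its genuine non-adiabatic generalization.
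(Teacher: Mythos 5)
Your proposal is correct and follows essentially the same route as the paper's own proof: insert the completeness relation $\sum_k \ket{\phi_k}\bra{\phi_k}=\hat{1}$, use the differentiated orthonormality relation $\langle d\phi_j\ket{\phi_k}=-\langle\phi_j\ket{d\phi_k}$, and conclude by the antisymmetry of the wedge product under the relabeling $j\leftrightarrow k$. The only cosmetic difference is that the paper symmetrizes the double sum explicitly before swapping indices, whereas you show directly that the sum equals its own negative; these are the same argument.
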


\begin{proof}
By using the completeness relation $\sum_{k = 1}^{d}\ket{\phi_{k}}\bra{\phi_{k}} = \hat{1}$, 
we find   
\begin{eqnarray}
\sum_{j=1}^{d} F^{(j)} = i\sum_{j,k=1}^d \langle d\phi_j \ket{\phi_k} \wedge \langle \phi_k \ket{d\phi_j} .  
\end{eqnarray}
The anti-symmetry of the $\wedge$ product and $\langle \phi_{j'} \ket{d\phi_{k'}} = 
- \langle d\phi_{j'} \ket{\phi_{k'}}$ for any pair $j',k'$, yield 
\begin{eqnarray}
\sum_{j=1}^d F^{(j)} & = & \frac{i}{2} \sum_{j,k=1}^d \Big( \langle d\phi_j \ket{\phi_k} \wedge 
\langle \phi_k \ket{d\phi_j} - \langle \phi_k \ket{d\phi_j} \wedge \langle d\phi_j \ket{\phi_k} \Big) 
\nonumber \\ 
 & = & \frac{i}{2} \sum_{j,k=1}^d \Big( \langle d\phi_j \ket{\phi_k} \wedge 
 \langle \phi_k \ket{d\phi_j} - \langle d\phi_k \ket{\phi_j} \wedge \langle \phi_j \ket{d\phi_k} \Big) = 0, 
\end{eqnarray}
where the last step follows from switching $j \leftrightarrow k$ in the second term inside the sum. 
\end{proof}

It is illustrative to verify Eq.~\eqref{eq:thm1} explicitly for the case of a qubit, i.e., for $d = 2$. 
We may write $\ket{\phi_{1}} = \mathcal{N} \big( \ket{0} + x \ket{1} \big)$ and 
$\ket{\phi_{2}} = \mathcal{N} \,  \big( -x^{\ast} \ket{0} + \ket{1} \big)$ with $x \in \mathbb{C}$ and 
$\mathcal{N} = (1+|x|^2)^{-1/2}$ \cite{page87}. By using the anti-symmetry of $\wedge$, one finds: 
\begin{eqnarray}
\bra{d\phi_{1}} \wedge \ket{d\phi_{1}} + \bra{d\phi_{2}} \wedge \ket{d\phi_{2}} 
 & = & 2d\mathcal{N} \wedge d\mathcal{N} \big( 1+ |x|^2\big) + 
 x^{\ast}\mathcal{N}  \left( d\mathcal{N}  \wedge dx + dx \wedge d\mathcal{N} \right) 
 \nonumber \\ 
  & & + x\mathcal{N}  \left( dx^{\ast} \wedge d\mathcal{N}  + 
 d\mathcal{N}  \wedge dx^{\ast} \right) + 
 \mathcal{N}^2 \left( dx^{\ast}  \wedge dx + dx \wedge dx^{\ast} \right) = 0 . 
\end{eqnarray}

We can now formulate the following theorem being the non-adiabatic analogue of the 
above mentioned cancellation of monopole charge in the adiabatic case: 
\begin{thm}
Let 
\begin{eqnarray}
\gamma_{g}^{(j)} = \int_{S_j|\partial S_j = C_j} F^{(j)}
\end{eqnarray}
be the geometric phase associated with the state $j$. These phases satisfy the sum rule: 
\begin{eqnarray}
\prod_{j=1}^d e^{i\gamma_{g}^{(j)}} = e^{i\sum_{j=1}^d \gamma_{g}^{(j)}} = 1. 
\label{eq:gp_rule}
\end{eqnarray}
\end{thm}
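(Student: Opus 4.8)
The plan is to turn the statement into the quantization of a winding number. Theorem~1 gives $\sum_j F^{(j)} = 0$ pointwise, but this does not settle the claim on its own, since each $\gamma_g^{(j)}$ is the flux of $F^{(j)}$ through a \emph{different} surface $S_j$, bounded by the loop $C_j$ that the state $\ket{\phi_j}$ sweeps out under the (common) evolution; one cannot integrate a single identity of $2$-forms over $d$ distinct regions. Instead I would first use Stokes' theorem to write each phase as the line integral $\gamma_g^{(j)} = \oint_{C_j} A^{(j)}$ with $A^{(j)} = i\bra{\phi_j}d\phi_j\rangle$, parametrize all $d$ loops by the common evolution parameter $s \in [0,t]$, and assemble the orthonormal frame into a single unitary $U_\Phi(s) \in \mathrm{U}(d)$ whose columns are $\ket{\phi_1(s)},\dots,\ket{\phi_d(s)}$ in a fixed reference basis. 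Because the sections are cyclic, $\ket{\phi_j(t)} = \ket{\phi_j(0)}$, the map $s \mapsto U_\Phi(s)$ is a closed loop in $\mathrm{U}(d)$.

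The second step is a short computation. Pulling the connection $1$-forms back to the curve and summing,
\begin{eqnarray}
\sum_{j=1}^{d}\gamma_g^{(j)} & = & i\int_{0}^{t}\sum_{j=1}^{d}\bra{\phi_j(s)}\dot\phi_j(s)\rangle\,ds \nonumber \\
 & = & i\int_{0}^{t}\Tr\!\big(U_\Phi^{\dagger}\dot U_\Phi\big)\,ds ,
\end{eqnarray}
and, by Jacobi's formula together with $U_\Phi^{\dagger} = U_\Phi^{-1}$, the integrand is a total derivative, $\Tr(U_\Phi^{\dagger}\dot U_\Phi) = \frac{d}{ds}\ln\det U_\Phi(s)$. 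Hence $\sum_j\gamma_g^{(j)} = i\big[\ln\det U_\Phi(t) - \ln\det U_\Phi(0)\big]$. Writing the single-valued map $\det U_\Phi : [0,t] \to \mathrm{U}(1)$ as $e^{i\theta(s)}$ with $\theta$ a continuous real function, this equals $-(\theta(t) - \theta(0))$, and since $\det U_\Phi(t) = \det U_\Phi(0)$ we get $\theta(t) - \theta(0) = 2\pi n$ with $n \in \mathbb{Z}$ the winding number of $\det U_\Phi$. Therefore $\sum_{j=1}^{d}\gamma_g^{(j)} = -2\pi n \in 2\pi\mathbb{Z}$, which is equivalent to Eq.~\eqref{eq:gp_rule}.

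The main obstacle is precisely the mismatch of the surfaces $S_j$; the device that overcomes it is the observation that, summed over the complete frame, the connection becomes $i\,\Tr(U_\Phi^{\dagger}dU_\Phi) = i\,d\ln\det U_\Phi$, a closed $1$-form whose period around any loop is an integer multiple of $2\pi i$, so the obstruction collapses to a quantized winding number without the need for a common bounding surface. A minor technical point is that $\ln\det$ is multivalued, so the last step must be run through $\det U_\Phi$ and the genuine function $\theta(s)$ rather than through $\ln\det U_\Phi$ itself; and it is worth noting that the argument uses only the smoothness and closedness of the $\mathrm{U}(d)$-frame, so the result is independent of any dynamical phases and of the particular realization of the loops.
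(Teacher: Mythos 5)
Your proof is correct, but it takes a genuinely different route from the paper. The paper's own argument stays at the level of the curvature 2-forms: it asserts that all the loops $C_j$ are projections of a single path $C$ in a common coordinate patch, pulls the sum $\sum_j \gamma_g^{(j)}$ back to a single surface integral of $\sum_j F^{(j)}$, and invokes Theorem~1 to conclude that the sum is exactly zero. You instead drop down to the connection 1-forms, write each phase as $\oint_{C_j} A^{(j)}$ for a closed lift, and observe that the summed connection along the frame is $i\,\Tr(U_\Phi^\dagger\,dU_\Phi) = i\,d\ln\det U_\Phi$, so the total is $-2\pi n$ with $n$ the winding number of $\det U_\Phi$. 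Your identity $\sum_j A^{(j)} = i\,d\ln\det U_\Phi$ is in effect a primitive for Theorem~1 (taking $d$ of it recovers $\sum_j F^{(j)}=0$ on the frame manifold), so the two arguments are consistent, but yours buys two things: it sidesteps the delicate point of having to exhibit a common bounding surface for $d$ distinct loops living a priori in different copies of projective space, and it makes transparent why the invariant statement is $e^{i\sum_j\gamma_g^{(j)}}=1$ rather than $\sum_j\gamma_g^{(j)}=0$ --- the integer $n$ depends on the choice of closed lifts and of the surfaces $S_j$, so only the sum modulo $2\pi$ is well defined, which is exactly what Eq.~\eqref{eq:gp_rule} asserts. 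The paper's route is shorter and emphasizes the monopole-cancellation analogy, but your winding-number formulation is the more airtight of the two; the one small caveat is that you should state explicitly that closed single-valued lifts $\ket{\phi_j(s)}$ along the common evolution can always be chosen (which the paper's construction of the sections guarantees), since Stokes' theorem in the form $\int_{S_j}F^{(j)}=\oint_{C_j}A^{(j)}$ presupposes it.
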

\begin{proof}
The theorem is immediate from Theorem 1 by noting that there exist local coordinates 
$x_{\mu}$ and a loop $C:s\in [0,t] \mapsto x_{\mu}$ whose projections are the $C_j$:s. 
Thus, 
\begin{eqnarray}
\sum_{j=1}^d \gamma_{g}^{(j)} = \sum_{j=1}^d \int_{S_j|\partial S_j = C_j} F^{(j)} = 
\int_{S|\partial S = C} \sum_{j=1}^d F^{(j)} = 0, 
\end{eqnarray}
which implies Eq.~\eqref{eq:gp_rule}. 
\end{proof}

An explicit construction of $x_{\mu}$ are as local coordinates of a manifold spanned by 
probability amplitudes $y_{\mu}^{(j)}$ of the local sections $\{ \ket{\phi_j} = 
\sum_{\mu = 0}^{d-1} y_{\mu}^{(j)} \ket{\mu} \}$ introduced in Theorem 1 above, and 
constrained by the orthonormality conditions 
\begin{eqnarray}
\sum_{\mu = 0}^{d-1} y_{\mu}^{(j)\ast} y_{\mu}^{(k)} = \delta_{jk}.  
\label{eq:orthonormality}
\end{eqnarray}
For a single qubit, a map of $\ket{\phi_j} = y_0^{(j)} \ket{0} + y_1^{(j)} \ket{1}$, $j=1,2$, to 
the local coordinate $x \in \mathbb{C}$ that satisfies Eq.~\eqref{eq:orthonormality} is 
$(y_0^{(1)},y_1^{(1)}) \mapsto \mathcal{N} (1,x)$ and $(y_0^{(2)},y_1^{(2)}) \mapsto 
\mathcal{N} (-x^{\ast},1)$. The two stereographically projected loops $C_1$ and $C_2$ 
can be viewed as mirror images on the Bloch sphere. The corresponding geometric phase 
factors are therefore  $e^{\mp i \frac{1}{2}\Omega}$ with $\Omega$ the solid angle 
subtended by $C_1$ (say), which verify Eq.~\eqref{eq:gp_rule}.

More generally, Theorem 2 entails that no evolution exists for which only one state has 
non-vanishing geometric phase. Another consequence of the sum rule is that geometric 
phase implementations of qudit gates $\ket{\phi_j} \mapsto e^{i\gamma_g^{(j)}} \ket{\phi_j}$ 
have determinant $+1$, i.e., must be SU($d$). Notably, this excludes geometric 
implementations of Hadamard gates 
\cite{wang21}
\begin{eqnarray}
{\rm H}_d=\frac{1}{\sqrt{d}} \sum_{\mu,\nu=0}^{d-1} \left(e^{i\frac{2\pi}{d}}\right)^{\mu \nu} 
\ket{\mu} \bra{\nu} 
\end{eqnarray}
in certain dimensions $d$. For instance, one can explicitly verify that this is the case for 
$d=2,3,4$, since $\det {\rm H}_2 = -1$ and $\det {\rm H}_3 = \det {\rm H}_4 = -i$.  

To conclude, we have demonstrated that the sum of non-adiabatic geometric phases and the 
underlying curvature 2-forms for a complete set of states always vanish. This finding 
demonstrates that the sum rule holds irrespective of whether the evolution is adiabatic or not. 
It has implications for qudit computation as it limits the types of gates that can be implemented 
by purely geometric means.

\end{document}